\newtheorem{theorem}{Theorem}[section]
\newtheorem{proposition}[theorem]{Proposition}
\newtheorem{corollary}[theorem]{Corollary}
\newtheorem{lemma}[theorem]{Lemma}
\newtheorem{example}[theorem]{Example}
\newtheorem{remark}[theorem]{Remark}
\newcommand{\Alpha}{A} 
\newcommand{\gmTwo}{2}
\newcommand{\MM}{M}
\DeclareMathOperator{\occur}{occur}
\DeclareMathOperator{\Factor}{Fac}
\DeclareMathOperator{\Prefix}{Pref}
\DeclareMathOperator{\Suffix}{Suf}
\DeclareMathOperator{\CFL}{CFL} 
\DeclareMathOperator{\CFLGL}{CFLGL} 
\DeclareMathOperator{\hiCFL}{hiCFL} 
\DeclareMathOperator{\Start}{S} 
\DeclareMathOperator{\ENW}{ENW} 
\DeclareMathOperator{\BalNaw}{\Omega}
\DeclareMathOperator{\height}{height} 
\DeclareMathOperator{\Balanced}{BAL}
\DeclareMathOperator{\replace}{replace} 
\DeclareMathOperator{\REG}{REG} %
\author[Josef Rukavicka]{Josef Rukavicka}
\title[Dissecting power of intersection of two context-free languages]{Dissecting power of intersection of two context-free languages}
\affiliation{
  Faculty of Nuclear Sciences and Physical Engineering, CTU in Prague, Czech Republic}
\keywords{Dissecting of infinite languages, Context-free languages, Intersection of context-free languages}
\begin{document}
\publicationdata{vol. 25:2 }{2023}{8}{10.46298/dmtcs.9063}{2022-02-08; 2022-02-08; 2022-10-26; 2023-01-26; 2023-06-06}{2023-07-06}
\maketitle

\begin{abstract}
We say that a language $L$ is \emph{constantly growing} if there is a constant $c$ such that for every word $u\in L$ there is a word $v\in L$ with $\vert u\vert<\vert v\vert\leq c+\vert u\vert$.
We say that a language $L$ is \emph{geometrically growing} if there is a constant $c$ such that for every word $u\in L$ there is a word $v\in L$ with $\vert u\vert<\vert v\vert\leq c\vert u\vert$.
Given two infinite languages $L_1,L_2$, we say that $L_1$ \emph{dissects} $L_2$ if $\vert L_2\setminus L_1\vert=\infty$ and $\vert L_1\cap L_2\vert=\infty$. In 2013, it was shown that for every constantly growing language $L$ there is a regular language $R$ such that $R$ dissects $L$.

In the current article we show how to dissect a geometrically growing language by a homomorphic image of intersection of two context-free languages. 

Consider three alphabets $\Gamma$, $\Sigma$, and $\Theta$ such that $\vert \Sigma\vert=1$ and $\vert \Theta\vert=4$. We prove that 
there are context-free languages $\MM_1,\MM_2\subseteq \Theta^*$, an erasing alphabetical homomorphism $\pi:\Theta^*\rightarrow \Sigma^*$, and a nonerasing alphabetical homomorphism $\varphi : \Gamma^*\rightarrow \Sigma^*$ such that: If $L\subseteq \Gamma^*$ is a geometrically growing language then there is a regular language $R\subseteq \Theta^*$ such that $\varphi^{-1}\left(\pi\left(R\cap \MM_1\cap \MM_2\right)\right)$ dissects the language $L$.
\end{abstract}



\section{Introduction}
In the theory of formal languages, the regular and the context-free languages constitute a fundamental concept that attracted a lot of attention in the past several decades. 

In contrast to regular languages, the context-free languages are closed neither under intersection nor under complement. The intersection of context-free languages have been systematically studied; see for instance \cite{GINSBURG1966620,weiner1973,10.1007/978-3-030-40608-0_24}. Let $\CFL_{k}$ denote the family of all languages such that for each $L\in \CFL_k$ there are $k$ context-free languages $L_1, L_2,\dots,L_k$ with $L=\bigcap_{i=1}^kL_i$. For each $k$, it has been shown that there is a language $L\in\CFL_{k+1}$ such that $L\not \in \CFL_{k}$. Thus the $k$-intersections of context-free languages form an infinite hierarchy in the family of all formal languages lying between context-free and context sensitive languages \cite{weiner1973}. 

Dissection of infinite languages belongs to the topics of the theory of formal languages that have been studied in recent years. Let $L_1$ and $L_2$ be infinite languages. We say that $L_1$ \emph{dissects} $L_2$ if $\vert L_2\setminus L_1\vert=\infty$ and $\vert L_1\cap L_2\vert=\infty$. Let $\mathcal{C}$ be a family of languages. We say that a language $L_2$ is $\mathcal{C}$-dissectible  if there is $L_1\in \mathcal{C}$ such that $L_1$ dissects $L_2$. Let $\REG$ denote the family of regular languages. In \cite{YAMAKAMI2013116} the $\REG$-dissectibility has been investigated. Several families of $\REG$-dissectible languages have been presented. Moreover, it has been shown that there are infinite languages that cannot be dissected with a regular language. Also some open questions for $\REG$-dissectibility can be found in \cite{YAMAKAMI2013116}. For example, it is not known if the complement of a context-free language is $\REG$-dissectible.  

Given two countable sets $A$ and $B$, we write $A\subseteq_{ae}B$ if $\vert A-B\vert<\infty$ and we write $A=_{ae}B$ if both $A\subseteq_{ae}B$ and $B\subseteq_{ae}A$ hold. The subscript ``ae'' stands for ``almost everywhere''. We say that \emph{$A$ covers $B$ with an infinite margin} (or \emph{$A$ i-covers $B$}, in short) if both $B\subseteq A$ and $A\not=_{ae}B$ hold. 
We represent a \emph{pair} of languages $A$ and $B$ such that $A$ i-covers $B$ by $i(A,B)$. A language $C$ is said to \emph{separate $i(B,A)$ with infinite margins} (or \emph{i-separates} $i(B,A)$, in short) if $B\subseteq C\subseteq A$, $A\not=_{ae}C$, and $B\not=_{ae}C$. In addition, given two language families $\mathcal{A},\mathcal{B}$ we  define \[i(\mathcal{B},\mathcal{A})=\{i(B,A)\mid A\in\mathcal{A}\mbox{ and }B\in\mathcal{B}\mbox{ and }A\mbox{ i-covers }B\}\mbox{.}\] We say that a language family $\mathcal{C}$ \emph{i-separates} $i(\mathcal{B},\mathcal{A})$ if for every $i(B,A)\in i(\mathcal{B},\mathcal{A})$ there is a language $C\in\mathcal{C}$ such that $C$ i-separates $i(B,A)$. 

Given $\mathcal{A}$ and $\mathcal{B}$ be any two language families, let  \[\mathcal{A}-\mathcal{B}=\{A-B\mid A\in\mathcal{A}\mbox{ and }B\in\mathcal{B}\}\mbox{.}\]
In \cite{YAMAKAMI2013116}, a connection between $\REG$-dissectibility and i-separation has been shown:
\begin{lemma}(see \cite[Lemma $5.1$]{YAMAKAMI2013116})
\label{djueir8887d}
Let $\mathcal{A}$ and $\mathcal{B}$ be any two language families and assume that $\mathcal{A}-\mathcal{B}$ is $\REG$-dissectible. It then holds that, for any $A\in\mathcal{A}$ and any $B\in\mathcal{B}$, if $A$ i-covers $B$, then there exists a language in $\mathcal{E}$ that i-separates $i(B,A)$, where $\mathcal{E}$ expresses the set $\{B\cup(C\cap A)\mid A\in\mathcal{A}, B\in\mathcal{B},C\in\REG\}$. In other words, $\mathcal{E}$ i-separates $i(\mathcal{B}, \mathcal{A})$.
\end{lemma}
Although Lemma \ref{djueir8887d} is stated explicitly for $\REG$-dissectibility, from the proof in \cite{YAMAKAMI2013116} it is clear that any family of languages could be applied. For convenience, in Section \ref{skdjiek99f8f} we present such generalization with a proof; see Lemma \ref{ujdkeijfi4rekd}. This generalized connection between dissectibility and i-separation adds another argument for the study of a dissection of infinite languages. 

There is a longstanding open question in \cite{bucher1980}: Given two context-free languages $L_1,L_2$ such that $L_1\subset L_2$ and $L_2\setminus L_1$ is an infinite language, is there a context-free language $L_3$ such that $L_3\subset L_2$, $L_1\subset L_3$, and both the languages $L_3\setminus L_1$ and $L_2\setminus L_3$ are infinite?  This question was mentioned also in \cite{YAMAKAMI2013116} using the i-separation: Let $\CFL$ denote the family of all context free languages. Does $\CFL$ i-separate $i(\CFL,\CFL)$?
Understanding the dissectibility could help to solve this open question or at least it could help to identify ``minimal'' language families $\mathcal{C}$ such that $\mathcal{C}$ i-separates $i(\CFL,\CFL)$.

Some other results concerning the dissection of infinite languages may be found in \cite{10.1007/978-3-319-64419-6_13}. Two  related topics are the construction of minimal covers of languages, \cite{Domaratzki2001MinimalCO}, and  the  immunity  of languages, \cite{10.1007/978-3-662-21545-6_34,post1944,YAMAKAMI2013116}. Recall that a language $L_1$ is called $\mathcal{C}$-\emph{immune} if there is no infinite language $L_2\subseteq L_1$ such that $L_2\in \mathcal{C}$. 

Let $\mathbb{N}^+$ denote the set of all positive integers.
An infinite language $L$ is called \emph{constantly growing} if there is a constant $c$ such that for every word $u\in L$ there is a word $v\in L$ with $\vert u\vert<\vert v\vert\leq c+\vert u\vert$. In \cite{YAMAKAMI2013116}, it has been proved that every constantly growing language $L$ is $\REG$-dissectible.

We  introduce a ``natural''  generalization of constantly growing languages as follows.  
Let $\mathbb{R}^+$ denote the set of all positive real numbers. 
We define that a language $L$ is a \emph{geometrically growing} language if there is a constant $c\in\mathbb{R}^+$ such that for every $u\in L$ there exists $v\in L$ with $\vert u\vert<\vert v\vert\leq c\vert u\vert$. We say also  that $L$ is $c$-geometrically growing.
In the current article we show how to dissect geometrically growing language by a homomorphic image of intersection of two context-free languages.

Consider two alphabets $\Sigma$ and $\Theta$ such that $\vert \Sigma\vert=1$ and $\vert \Theta\vert=4$; that is $\Sigma$ and $\Theta$ denote alphabets with one letter and four letters, respectively. The main results of the current article are the following theorem and its corollary below.
\begin{theorem}
\label{dufife554d845}
There are context-free languages $\MM_1,\MM_2\subseteq \Theta^*$ and an erasing alphabetical homomorphism $\pi:\Theta^*\rightarrow \Sigma^*$ such that: If $L\subseteq\Sigma^*$ is a geometrically growing language then there is a regular language $R\subseteq \Theta^*$  such that $\pi\left(R\cap \MM_1\cap \MM_2\right)$ dissects the language $L$.
\end{theorem}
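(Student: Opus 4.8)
\section*{Proof proposal}

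The plan is to realise the unary language $L$ inside $\Theta^*$ by a well-chosen coding, and to let the intersection of the two context free languages $\MM_1,\MM_2$ together with a regular language $R$ carve out exactly those codewords whose ``content'' lands in a controlled arithmetic-progression-like subset of $L$. Concretely, write $\Theta=\{a,b,c,d\}$ and let $\pi$ be the erasing alphabetical homomorphism that sends, say, $a\mapsto \epsilon$ and $b,c,d\mapsto x$ (where $\Sigma=\{x\}$); the letter $a$ will serve as padding that lets us freely tune the length of the $\Theta$-preimage independently of the $\Sigma$-length of its image. I would fix $\MM_1$ to be (a coding of) $\{\,w\,\sharp\,w^{R}\cdots\}$-type language and $\MM_2$ a second linear/context free language so that $\MM_1\cap\MM_2$ forces the word to encode a pair of equal numbers $m=n$ in two independent ``counters'' — the classic trick by which $\{b^na^nc^n\}$ sits in the intersection of two context free languages. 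The point of using \emph{two} CFLs (rather than one) is exactly to get this equality/squaring behaviour, which a single CFL cannot express.

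Next I would use the regular language $R$ as the free parameter that depends on $L$. Given the $c$-geometrically growing language $L\subseteq\Sigma^*$, the key combinatorial observation is that because $L$ is $c$-geometrically growing, for every threshold $t$ there is some element of $L$ of length in the window $(t, ct]$; consequently one can extract from $L$ an infinite subset $\{n_1<n_2<\cdots\}$ of lengths with $n_{i+1}\le c\,n_i$ together with an infinite set of lengths that are ``skipped'' (lie in $L$-gaps of the complement, or more carefully: we split $\mathbb N_1$ so that both $L$ and its complement meet the chosen index set infinitely often). Then $R$ is designed to select codewords whose counter value $n$ lies in the first of these two infinite sets while $\MM_1\cap\MM_2$ guarantees the encoded number is a genuine length of a word we have placed in $L$; applying $\pi$ collapses the code to $x^{n}$. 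Thus $\pi(R\cap\MM_1\cap\MM_2)$ is an infinite subset of $L$ whose complement within $L$ is also infinite — i.e. it dissects $L$. The erasing of $\pi$ is what lets the length of the $\Theta$-word (which must be large enough to encode $n$ in unary-with-separators) shrink down to $n$ in $\Sigma^*$, so that the image genuinely lands on the right unary words.

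The main obstacle, and the part requiring the most care, is the \emph{coordination} between three things that are quantified in the wrong order: $\MM_1,\MM_2,\pi$ must be fixed \emph{once and for all}, before $L$ is revealed, while $R$ may depend on $L$. So the CFLs cannot ``know'' $L$; all the information about $L$ must be pushed into $R$, which is only regular. The resolution is to make $\MM_1\cap\MM_2$ encode a purely arithmetic relation — something like ``the word codes a number $n$ and also a witness that $n$ is attainable'' (e.g.\ $n$ written in two redundant ways whose consistency is exactly a CFL-intersection condition, as in $\{b^n a^n c^n\}$) — and to observe that a \emph{regular} set of such codes suffices to pick out any sufficiently sparse set of values of $n$, in particular an infinite set all of whose values are lengths of words of $L$, using that geometric growth makes $L$'s length set ``syndetic on a logarithmic scale'' so that the needed values can be caught by a regular (indeed finite-state, ultimately periodic in $\log$) pattern, or alternatively by letting $R$ itself be as complex as a regular language allows. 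I would then verify the two dissection conditions: infinitude of $\pi(R\cap\MM_1\cap\MM_2)\cap L$ follows from geometric growth (pigeonholing a length into each window), and infinitude of $L\setminus\pi(R\cap\MM_1\cap\MM_2)$ follows by choosing $R$ to deliberately omit a second infinite family of achievable lengths. Checking that $\pi$ applied to the intersection does not accidentally produce unwanted short words (i.e.\ that the padding letter $a$ cannot be abused to hit lengths outside the intended set) is the last routine-but-delicate verification.
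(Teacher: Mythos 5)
There is a genuine gap, and it sits exactly at the point you flag as ``the main obstacle.'' With your choice of $\MM_1\cap\MM_2$ as a $\{b^na^nc^n\}$-style consistency check, a codeword for the number $n$ has length $\Theta(n)$, and the only leverage a regular language $R$ has over such codewords is an ultimately periodic condition on $n$ (equivalently, on block lengths that are linear in $n$). So after stripping away the coding, your construction reduces to the claim that for the length set $N=\{|w| : w\in L\}$ there is an ultimately periodic set $P$ with both $N\cap P$ and $N\setminus P$ infinite --- i.e.\ that the unary language $L$ is itself $\REG$-dissectible. That is precisely what is \emph{not} available for geometrically growing languages (it is the known result only for \emph{constantly} growing ones, \cite[Lemma 3.3]{YAMAKAMI2013116}), and your appeal to $L$ being ``syndetic on a logarithmic scale'' cannot be cashed in: a finite automaton reading a word of length $\Theta(n)$ has no access to $\log_2 n$. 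The CFLs cannot know $L$, as you correctly observe, but neither can $R$ do this job; something has to physically compress the scale.

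The paper's resolution, which is the idea missing from your proposal, is to choose $\MM_1,\MM_2$ so that their intersection consists of codings of \emph{balanced} binary trees: $\MM_1$ enforces well-parenthesized tree structure with leaves $pzp$ or $pzzp$, and $\MM_2$ enforces that all leaves lie at the same depth $h$. A word in the intersection with $n$ occurrences of $z$ then necessarily begins with the block $x^h$ where $2^h\le n\le 2^{h+1}$, so $h\approx\log_2 n$ is written in unary as a \emph{prefix} of the codeword, where a regular language can read it. Geometric growth of $L$ translates into constant growth of the set of achievable heights $h$, and only then does the cited $\REG$-dissectibility lemma apply (via a regular $R$ keyed on the prefix $x^h p$). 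One also needs the paper's Proposition \ref{un7bbv3b} that every $n\ge 2$ is realized by some balanced word, so that the preimage of $L$ under $\pi$ is infinite. Your padding letter and your two-counter CFLs give no logarithmic compression, so the dissection step cannot be completed as proposed.
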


To emphasize the essential result of our article, we consider in Theorem \ref{dufife554d845} that $L$ is a language over an alphabet with one letter. Let $\Gamma$ denote a finite alphabet. The next corollary shows a generalization for a geometrically growing language over the alphabet $\Gamma$.
\begin{corollary}
There are context-free languages $\MM_1,\MM_2\subseteq \Theta^*$, an erasing alphabetical homomorphism $\pi:\Theta^*\rightarrow \Sigma^*$, and a nonerasing alphabetical homomorphism $\varphi : \Gamma^*\rightarrow \Sigma^*$ such that: If $L\subseteq \Gamma^*$ is a geometrically growing language then there is a regular language $R\subseteq \Theta^*$ such that $\varphi^{-1}\left(\pi\left(R\cap \MM_1\cap \MM_2\right)\right)$ dissects the language $L$.
\end{corollary}
\begin{proof}
Let $L_1 \subseteq \Gamma^*$ be a language and let $\varphi:\Gamma^*\rightarrow \Sigma^*$ be an alphabetical homomorphism defined as follows: $\varphi(a)=z$ for every $a\in\Gamma$, where $z$ is the only letter of the alphabet $\Sigma$. If $L_2\subseteq\Sigma^*$ and $L_2$ dissects $\varphi(L_1)$ then clearly the language $L_3=\varphi^{-1}(L_2)=\{w\in L_1\mid \varphi(w)\in L_2\}$ dissects $L_1$. This completes the proof.
\end{proof}

\begin{remark}
Since the intersection of a regular language and a context-free language is a context-free language we have $R\cap\MM_1\cap\MM_2$ is also  intersection of two context-free languages. This explains why we do not mention the regular language in the title of the article.
\end{remark}

We sketch the basic ideas of our proof. Note that a non-associative word on the letter $a$ is a ``well parenthesized'' word containing a given number of occurrences of $a$. It is known that the number of non-associative words containing $n+1$ occurrences of $a$ is equal to the $n$-th Catalan number \cite{stanley_fomin_1999}. For example for $n=3$ we have five distinct non-associative words: $(((aa)a)a)$, $((aa)(aa))$, $(a(a(aa)))$, $(a((aa)a))$, and $((a(aa))a)$. Every non-associative word contains the prefix $(^ka$ for some $k\in\mathbb{N}^+$, where $(^k$ denotes the $k$-th power of the opening bracket. We show that there are non-associative words such that $k$ equals ``approximately'' $\log_2{n}$. We construct two context-free languages whose intersection accepts such words and we call these words \emph{balanced extended non-associative words}. By counting the number of opening brackets of a balanced extended non-associative word with $n$ occurrences of $a$ we can compute the logarithm of the number of occurrences of $a$. 
If $L$ is a geometrically growing language then the language \[\widehat L=\{a^j\mid j=\lceil\log_2{\vert w\vert}\rceil\mbox{ and }w\in L\}\] is obviously constantly growing. Hence, by means of intersection of two context-free languages we transform the challenge of dissecting a geometrically growing language to the challenge of dissecting a constantly growing language. This approach allows us to prove our result.

\section{Preliminaries}
Let $\epsilon$ denote the empty word. Given a finite alphabet $\Alpha$, let $\Alpha^+$ denote the set of all finite nonempty words over the alphabet $\Alpha$ and let $\Alpha^*=\Alpha^+\cup\{\epsilon\}$.

Let $\Factor(w)$ denote the set of all factors of the word $w\in \Alpha^*$. We have $\epsilon,w\in\Factor(w)$. 

Let $\Prefix(w), \Suffix(w)\subseteq \Factor(w)$ denote the set of all prefixes and suffixes of $w\in \Alpha^*$, respectively. We have $\epsilon,w\in \Prefix(w)\cap\Suffix(w)$. Let $\occur(w,t)$ denote the number of occurrences of the factor $t\in \Alpha^+$ in the word $w\in \Alpha^*$; formally \[\occur(w,t)=\vert \{v\in\Suffix(w)\mid t\in \Prefix(v)\}\vert\mbox{.}\]

Given two finite alphabets $\Alpha_1,\Alpha_2$, a \emph{homomorphism} from $\Alpha_1^*$ to $\Alpha_2^*$ is a function $\tau:\Alpha_1^*\rightarrow \Alpha_2^*$ such $\tau(uv)=\tau(u)\tau(v)$, where $u,v\in \Alpha_1^*$. 
It follows that in order to define a homomorphism $\tau$, it suffices to define $\tau(a)$ for every $a\in \Alpha_1$; such definition ``naturally'' extends to every word $u\in \Alpha_1^*$. We say that $\tau$ is an \emph{alphabetical homomorphism} if $\tau(a)\in \Alpha_2$ for every $a\in \Alpha_1$. We say that $\tau$ is an \emph{erasing alphabetical homomorphism} if $\tau(a)\in \Alpha_2\cup\{\epsilon\}$ for every $a\in \Alpha_1$ and there is at least one $a\in \Alpha_1$ such that $\tau(a)=\epsilon$.

\section{Balanced non-associative words}

Let $\Theta=\{x,y,z,p\}$. We reserve the symbols $x,y,z,p$ for the  letters of the alphabet  $\Theta$. It means that wherever in our article we use the symbols $x,y,z,p$, we refer to the letters of $\Theta$.

Let $\ENW\subseteq \Theta^*$ be the language generated by the following context-free grammar, where $S$ is a start non-terminal symbol, $P$ is a non-terminal symbol, and $x,y,z,p\in\Theta$ are terminal symbols:
\begin{itemize}
\item $\Start\rightarrow xPPy$,
\item $P\rightarrow S \mid pzp\mid pzzp$.
\end{itemize}
We call the words from $\ENW$ \emph{extended non-associative words}. 
\begin{remark}
\label{udd887fjbns}
Let the letter $x$ represent an opening bracket and the letter $y$ a closing bracket.
It is easy to see that if $v_1,v_2\in\{pzp,pzzp\}\cup\ENW$ then $xv_1v_2y\in\ENW$.
Note that if $w\in \ENW$ then $w$ is ``well parenthesized'' with brackets $x$ and $y$.
Also note that if $w\in \ENW$, $xvy\in \Factor(w)$, $\occur(v,x)=0$, and $\occur(v,y)=0$, then $v\in\{pzppzzp, pzppzp, pzzppzzp, pzzppzp\}\mbox{.}$
\end{remark}

\begin{remark}
\label{dyeu788hf}
Recall from \cite{stanley_fomin_1999} that a ``standard'' non-associative word  on the letter $a$, mentioned in the introduction, can be represented as a full binary rooted tree, where every inner node represents a corresponding pair of brackets and every leaf represents the letter $a$. It is known that the number of inner nodes plus one is equal to the number of leaves in a full binary rooted tree. 

Obviously we can also represent the extended non-associative words from $\ENW$ as full binary rooted trees, where the factors $pzp$ and $pzzp$ represent the leaves. 
It follows that 
if $w\in \ENW$ then \[\occur(w,x)+1=\occur(w,pzp)+\occur(w,pzzp)\mbox{.}\]
If $w$ is a non-associative word on the symbol $a$ with brackets $x,y$ having $n+1$ occurrences of $a$ then we get $2^{n+1}$ extended non-associative words by replacing $a$ with $pzp$ or $pzzp$; for example if $K=\{xxa_1a_2ya_3y\mid a_1,a_2,a_3\in\{pzp,pzzp\} \}$ 
then $\vert K\vert=2^3=8$ and $K\subseteq \ENW$.
Since the number of non-associative words containing $n+1$ occurrences of $a$ is equal to the $n$-th Catalan number $C_n$ \cite{stanley_fomin_1999}, it is clear that 
\[\vert \{w\in \ENW\mid \occur(w,pzp)+\occur(w,pzzp)=n+1\}\vert=2^{n+1}C_n\mbox{,}\] where $n\in\mathbb{N}^+$.
\end{remark}

Let $\Balanced \subseteq \Theta^*$ be the language generated by the following context-free grammar, where $\Start$ is a start non-terminal symbol, $T,V,Z$ are  non-terminal symbols, and $x,y,z,p\in \Theta$ are terminal symbols:
\begin{itemize}
\item $\Start\rightarrow x\Start y \mid pZVZp$,
\item $V\rightarrow VZV \mid pTp$,
\item $T\rightarrow yTx\mid \epsilon$,
\item $Z \rightarrow  z\mid zz$.
\end{itemize}

We call the words from $\Balanced$ \emph{balanced words}. The reason for the name ``balanced'' comes from the following lemma.
\begin{lemma}
\label{fyth7xbsj}
If $u\in\Balanced$, $w\in \Factor(u)$, and $pwp\in \Factor(u)$ then $\occur(w,x)=\occur(w,y)$. 
\end{lemma}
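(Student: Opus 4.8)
The plan is to argue by structural induction on the construction of $u\in\Balanced$, exploiting the fact that the only place where the letter $p$ appears is as a ``wrapper'' around blocks whose bracket-content is balanced by construction. First I would record the basic fact that every word $v\in\widetilde\Balanced$ satisfies $\occur(v,x)=\occur(v,y)$; this is immediate by induction on the definition of $\widetilde\Balanced$, since the base words $py^ix^ip$ have exactly $i$ occurrences of each of $x,y$, and the productions $v_1\mapsto v_1zv_2$, $v_1\mapsto v_1zzv_2$ merely concatenate (inserting only $z$'s, which contribute nothing to the counts of $x$ and $y$). I would also note from the definition that each $p$ in a word of $\widetilde\Balanced$ occurs either at the very start or the very end of that word (each base word contributes a leading and a trailing $p$, and concatenation via $z$ or $zz$ does not create new adjacencies involving $p$ internally except as already accounted for).

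Next I would analyze where a factor of the form $pwp$ can sit inside a balanced word $u = x^i p z' v z' p y^i$ with $v\in\widetilde\Balanced$, $z'\in\{z,zz\}$ on each side, and $i\in\mathbb{N}_1$ (this is the shape forced by the last clause of the definition of $\Balanced$). The two $p$'s flanking $w$ must be two occurrences of $p$ in $u$; the occurrences of $p$ in $u$ are exactly the two ``outer'' $p$'s shown explicitly plus all the $p$'s occurring inside $v$. One checks that any choice of two such $p$'s that actually bound a factor $pwp$ — i.e. with no further $p$ strictly between them — yields a $w$ of one of a few explicit forms: either $w$ is a (possibly $z$- or $zz$-padded) concatenation of blocks $y^{i_1}x^{i_1}$ coming from inside $v$, or $w = z' v z'$ itself (the outermost case), or $w$ lies entirely inside $v$ between two consecutive $p$'s of $v$. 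In every one of these cases $w$ is, up to the interspersed $z$'s, a concatenation of base words $py^jx^jp$ with their flanking $p$'s removed, hence $\occur(w,x)=\occur(w,y)$, because each such base contributes equally to both counts and the $z$'s contribute to neither. For the outermost case $w=z'vz'$ we invoke the preliminary fact $\occur(v,x)=\occur(v,y)$ directly.

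The main obstacle — and the step deserving the most care — is the bookkeeping of exactly which pairs of $p$-occurrences in $u$ are ``consecutive'' (bound a factor $pwp$ with no $p$ in between), and verifying that in each such case the intervening block $w$ has balanced $x,y$-counts. To handle this cleanly I would first prove an auxiliary claim about $\widetilde\Balanced$ by induction: if $v\in\widetilde\Balanced$ and $p w' p\in\Factor(v)$ then $\occur(w',x)=\occur(w',y)$, and moreover every factor of $v$ lying strictly between two consecutive $p$'s is a $z/zz$-padded concatenation of words of the form $y^jx^j$. The base case $v=py^ix^ip$ is trivial (the only such $w'$ is $y^ix^i$), and the inductive step for $v=v_1 z'' v_2$ splits into: a consecutive $p$-pair inside $v_1$ (apply the hypothesis to $v_1$), inside $v_2$ (apply to $v_2$), or straddling the join, in which case $w'$ is the suffix of $v_1$ after its last $p$, then $z''$, then the prefix of $v_2$ up to its first $p$ — and by the structural description this suffix is some $y^{j}x^{j}$ and this prefix is some $y^{k}x^{k}$, so $w' = y^{j}x^{j}\,z''\,y^{k}x^{k}$ is balanced. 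With this auxiliary claim in hand, the lemma for $\Balanced$ follows by pushing the pair of $p$'s through the single production $\Start\to x\Start y\mid pZVZp$: the $x\Start y$ steps add matched $x$'s and $y$'s outside everything and do not touch the location of any $pwp$, and the base production $pZVZp$ reduces everything to the $\widetilde\Balanced$ case already treated.
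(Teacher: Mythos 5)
Your proposal is correct and rests on the same observation as the paper's proof: any $w$ with $pwp\in\Factor(u)$ decomposes at its internal $p$'s into segments each of which is either a block of $z$'s or of the form $y^jx^j$, hence contributes equally to the counts of $x$ and $y$. The paper merely asserts the three underlying structural facts and says ``the lemma follows,'' whereas you supply the structural induction on the grammar that justifies them; the content is the same.
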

\begin{proof}
The proof is by induction on $j=\occur(w,p)$. From the definition of the language $\Balanced$, it is clear that if $\occur(w,p)=0$ then $w=y^ix^i$ for some $i\in\{0\}\cup\mathbb{N}^+$. Hence we have the base case for $j=0$. Suppose $j>0$. Then it follows that $w= w_1 pw_2$ for some $w_1,w_2\in\Factor(w)$. Since $\occur(w_1,p)+1+\occur(w_2,p)=j$, we have $\occur(w_1,p),\occur(w_2,p)<j\mbox{.}$ Hence the lemma holds for both $pw_1p$ and $pw_2p$ and in consequence  lemma holds also for $pwp$. This completes the proof.
\end{proof}

Let $\BalNaw=\ENW\cap \Balanced\subseteq\Theta^*\mbox{.}$
We call the words from $\BalNaw$ \emph{balanced extended non-associative words}.
Let $\BalNaw(n)=\{w\in \BalNaw\mid \occur(w,z)=n\}$, where $n\in \mathbb{N}^+$. 

\begin{remark}
To understand the idea of balanced extended non-associative words, suppose $w\in\BalNaw$ and let $G$ be the full binary rooted tree that represents $w$ (as explained in Remark \ref{dyeu788hf}). Then in $G$, the length of the path from the root to a leaf does not depend on the leaf; it means the number of inner nodes lying on the path from a leaf to the root is a constant for $G$.
\end{remark}

\begin{example}
Let $v_1=xpzpxpzppzzpyy$ and $v_2=xxpzppzpyxpzzppzzpyy$. We have $v_1,v_2\in \ENW$, $v_1\not\in \BalNaw$,  and  $v_2\in \BalNaw$.
\end{example}

Given a word $w\in \Theta^*$, let $\height(w)=\max\{j\mid x^j\in \Factor(w)\}\mbox{.}$ We call $\height(w)$ the \emph{height} of $w$.
We show that if $w\in  \BalNaw$ and $h$ is the height of $w$ then $x^h$ is a prefix of $w$ and $y^h$ is a suffix of $w$.
\begin{lemma}
\label{oiu33888weu3}
If $w\in \BalNaw$ and $h=\height(w)$ then $x^h\in \Prefix(w)$ and $y^h\in \Suffix(w)$.
\end{lemma}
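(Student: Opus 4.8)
The plan is to show the two statements $x^h\in\Prefix(w)$ and $y^h\in\Suffix(w)$ by combining the structural description of $\Balanced$ with the well-parenthesized property of $\ENW$. First I would fix $w\in\BalNaw$ and recall that $w\in\Balanced$, so by the grammar in Lemma \ref{dye7rur09} (or directly from the recursive definition) $w$ has the form $x^ip\,z\,v\,z^{?}p\,y^i$ with $i\in\mathbb{N}_1$, some $v\in\widetilde\Balanced$, and the $z$-blocks being $z$ or $zz$; in particular $w$ starts with $x^ip$ and ends with $py^i$, and the only occurrences of $x$ in the prefix before the first $p$ are these $i$ copies, and symmetrically for $y$. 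So it suffices to prove $h=i$, i.e.\ that no factor $x^{j}$ with $j>i$ occurs anywhere in $w$, and likewise no $y^j$ with $j>i$.

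The key observation is that inside $v\in\widetilde\Balanced$ the $x$'s and $y$'s come only from the "atoms" $py^kx^kp$, where a run of $x$'s is immediately preceded by a run of $y$'s of equal length (this is exactly the content of the first two bullet points in the proof of Lemma \ref{fyth7xbsj}). Hence within $v$ any maximal run of $x$'s has some length $k$, but it is preceded by $y^k$, so it cannot be extended on the left; and it is followed by $p$ inside $v$ (the atom is $py^kx^kp$), so it cannot be extended on the right \emph{within} $v$ either. The only way to get a long run of $x$'s would be to have it abut the outer block $x^i$; but the outer structure is $x^ip z v \cdots$, so the outer $x^i$ is followed by $p$, not by an $x$. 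Therefore every maximal run of $x$ in $w$ is either the outer run of length $i$ or a run of length $k\le i$ coming from an atom $py^kx^kp$ nested inside $\widetilde\Balanced$ (one checks by induction on the definition of $\widetilde\Balanced$ that every such $k$ satisfies $k\le i$, since the atoms are built from the innermost $py^ix^ip$ by wrapping with $x^ip\,z\cdots z p\,y^i$, which never increases the exponents of interior runs beyond $i$). Consequently $h=\height(w)=i$, and since $w$ begins with $x^i$ and ends with $y^i$ we get $x^h\in\Prefix(w)$ and $y^h\in\Suffix(w)$.

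Let me restate the induction more carefully, since that is where the real work sits. I would prove by induction on the construction of $\widetilde\Balanced$ the auxiliary claim: if $u\in\widetilde\Balanced$ then every maximal run of $x$ in $u$ has length equal to the length of the $y$-run immediately preceding it, and the base run $x^ip$ "seen from outside" is the unique largest. Concretely: $py^ix^ip\in\widetilde\Balanced$ has a single $x$-run of length $i$ preceded by $y^i$; and if $v_1,v_2$ have this property then so do $v_1zv_2$ and $v_1zzv_2$, because concatenating through a $z$-block (which contains no $x$ or $y$, using the third bullet of Lemma \ref{fyth7xbsj}'s proof, $zx,xz\notin\Factor$) does not merge an $x$-run of $v_2$ with anything in $v_1$. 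Then when we wrap with $x^ip\,z\,(\cdot)\,z^{?}p\,y^i$ to land in $\Balanced$, the new outer $x^i$ is flanked by (nothing / word boundary) on the left and $p$ on the right, so it is a maximal run of length exactly $i$, and all interior $x$-runs still have length $\le i$ by the induction hypothesis. This establishes $\height(w)=i$ and finishes the proof.

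The main obstacle, and the only place where one must be a little careful, is ruling out that two runs of $x$ (or of $y$) from different parts of the recursive decomposition get concatenated into one longer run — e.g.\ that the trailing part of one $\widetilde\Balanced$-piece ends in $x$'s and the next begins with $x$'s. This is precisely what the separating letters $p$ and $z$ prevent: every atom is $py^kx^kp$, so each $x$-run is buttressed by $p$ on the right and $y^k$ on the left, and the recursive combining steps always insert a $z$ or $zz$ between pieces. I would therefore make explicit up front the three "no bad factor" facts already listed in the proof of Lemma \ref{fyth7xbsj} ($px^ip,py^ip\notin\Factor$; $zx,xz,yz,zy\notin\Factor$; $py^jx^kp\in\Factor\Rightarrow j=k$), and then the argument above is a routine induction.
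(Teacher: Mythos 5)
There is a genuine gap here: your argument never actually uses the hypothesis $w\in\ENW$, and the auxiliary claim you propose to prove by induction is false for $\Balanced$ alone. The definition of $\widetilde\Balanced$ is not a recursive wrapping construction: its atoms are $py^kx^kp$ for \emph{arbitrary} $k\in\mathbb{N}_1$, joined by $z$ or $zz$, and the wrapping $x^ipz(\cdot)zpy^i$ happens exactly once, at the top level, with $i$ chosen independently of the atoms' exponents. So nothing in the definition of $\Balanced$ forces the interior run lengths to satisfy $k\le i$; for instance $u=xpz\,py^5x^5p\,zpy\in\Balanced$ has $\height(u)=5$ but begins with $x^1$. (This $u$ is not in $\ENW$, so the lemma itself survives, but your induction step --- ``the atoms are built from the innermost $py^ix^ip$ by wrapping \dots which never increases the exponents of interior runs beyond $i$'' --- rests on a misreading of the definition and fails.) The well-parenthesized property of $\ENW$, which you announce in your opening sentence but never invoke in the actual deduction, is exactly the missing ingredient.

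The paper supplies it as follows: since $w\in\ENW$, some $x^{\overline h}p$ is a prefix of $w$; suppose $\overline h<h$. Then the first occurrence of $x^h$ is interior, and by the structure of $\Balanced$ it sits inside an atom, so $w=x^{\overline h}pw_1py^hx^hpw_2$. Lemma \ref{fyth7xbsj} gives $\occur(pw_1p,x)=\occur(pw_1p,y)$, hence the prefix $r=x^{\overline h}pw_1py^h$ satisfies $\occur(r,x)-\occur(r,y)=\overline h-h<0$, contradicting the fact that every prefix of a well-parenthesized word in $\ENW$ has at least as many $x$'s as $y$'s. Your run-separation analysis (the letters $p$ and $z$ prevent distinct $x$-runs from coalescing, and an interior maximal $x$-run of length $k$ is preceded by $y^k$) is correct and is indeed needed to locate the first $x^h$ inside an atom of the form $py^hx^hp$; but on its own it cannot bound $h$ by the outer exponent, so you must add the prefix-counting contradiction to close the argument.
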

\begin{proof}
Since $\BalNaw\subseteq\ENW$, there is $\widehat h\in\mathbb{N}^+$ such that $x^{\widehat h}p\in \Prefix(w)$. To get a contradiction suppose that $\widehat h<h$. Because $\BalNaw\subseteq \Balanced$ it follows that $w=x^{\widehat h}pw_1py^hx^hpw_2$ for some $w_1\in \Factor(w)$, $w_2\in \Suffix(w)$, and \[\occur(x^{\widehat h}pw_1py^hx^h, x^h)=1\mbox{.}\]

\noindent
Lemma \ref{fyth7xbsj} implies that   $\occur(pw_1p,x)=\occur(pw_1p,y)$. Let $r=x^{\widehat h}pw_1py^h$. It follows that \[\occur(r,x)<\occur(r,y)\mbox{.}\]
This is a contradiction, since for every prefix $v\in \Prefix(w)$ of an extended non-associative word $w\in \ENW$ (a well parenthesized word) we have $\occur(v,x)\geq\occur(v,y)$. We conclude that $\widehat h=h$ and $x^h\in \Prefix(w)$. In an analogous way we can show that $y^h\in \Suffix(w)$. This completes the proof.
\end{proof}

For a word $w\in \BalNaw$, we show the relation between the height of $w$ and the number of occurrences of $z$ in $w$.
\begin{proposition}
\label{ryybvnmd73yrh}
If $w\in \BalNaw$ and $h=\height(w)$ then \[2^{h}\leq\occur(w,z)\leq 2^{h+1}\mbox{.}\]
\end{proposition}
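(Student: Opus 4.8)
The plan is to determine the $\ENW$-parse tree of $w$ exactly: membership in $\ENW$ makes it a full binary tree, while membership in $\Balanced$ forces every leaf to the same depth $h$, so the tree is perfect with $2^{h}$ leaves, and $\occur(w,z)$ is then squeezed between $2^{h}$ and $2^{h+1}$.

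First I would record the shape of a balanced word. From the grammar in Lemma~\ref{dye7rur09} (or the definition of $\Balanced$), every $w\in\Balanced$ can be written as
\[w = x^{i}\,p\,Z_{0}\,(py^{a_{1}}x^{a_{1}}p)\,Z_{1}\,(py^{a_{2}}x^{a_{2}}p)\cdots Z_{m-1}\,(py^{a_{m}}x^{a_{m}}p)\,Z_{m}\,p\,y^{i},\]
with $i\ge 1$, $m\ge 1$, $a_{r}\ge 0$ (a block with $a_{r}=0$ is just $pp$), and $Z_{r}\in\{z,zz\}$. Since $w\in\BalNaw\subseteq\ENW$, Lemma~\ref{oiu33888weu3} gives $x^{h}\in\Prefix(w)$, which together with the displayed form forces $i=\height(w)=h$. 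Next I would locate the leaves of the parse tree of $w$. By Remark~\ref{dyeu788hf}, $w\in\ENW$ corresponds to a full binary rooted tree $G$ whose leaves are the occurrences of the factors $pzp$ and $pzzp$, with $\occur(w,x)+1$ leaves in total. The maximal runs of $z$ in $w$ are exactly $Z_{0},Z_{1},\dots,Z_{m}$, each flanked by $p$ on both sides, so the occurrences of $pzp$ and $pzzp$ in $w$ are precisely the $m+1$ factors $F_{k}:=p\,Z_{k}\,p$. In particular $\occur(w,pzp)+\occur(w,pzzp)=m+1$, hence $\occur(w,x)=m$ and the leaves of $G$ are exactly $F_{0},\dots,F_{m}$.

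The crucial step is to show that all leaves of $G$ lie at depth $h$. The depth of a leaf of $G$ equals the number of $x$'s minus the number of $y$'s in the prefix of $w$ preceding that leaf (the number of brackets still open at that point). For $F_{0}$ this prefix is $x^{h}$, of excess $h$; for each $F_{k}$ with $1\le k\le m$ the preceding prefix is $x^{h}$ followed by a word built from complete blocks $py^{a_{r}}x^{a_{r}}p$ together with a final $py^{a_{k}}x^{a_{k}}$, each of which contributes equally many $x$'s and $y$'s, so the excess is again $h$. Thus every leaf of $G$ is at depth $h$. A full binary tree all of whose leaves are at depth $h$ is the perfect binary tree with $2^{h}$ leaves (every node at depth $<h$ must be internal, hence has two children, so level $\ell$ carries $2^{\ell}$ nodes). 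Therefore $\occur(w,pzp)+\occur(w,pzzp)=m+1=2^{h}$.

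To finish, I would count occurrences of $z$. Every $z$ of $w$ lies in exactly one $F_{k}$, contributing $1$ if $F_{k}=pzp$ and $2$ if $F_{k}=pzzp$, so
\[\occur(w,z)=\occur(w,pzp)+2\occur(w,pzzp)=\bigl(\occur(w,pzp)+\occur(w,pzzp)\bigr)+\occur(w,pzzp)=2^{h}+\occur(w,pzzp).\]
Since $0\le\occur(w,pzzp)\le\occur(w,pzp)+\occur(w,pzzp)=2^{h}$, this yields $2^{h}\le\occur(w,z)\le 2^{h+1}$. I expect the main obstacle to be the bookkeeping around identifying the leaves and computing their depths (the second and third paragraphs): checking carefully that the $m+1$ factors $p\,Z_{k}\,p$ really are all the $pzp/pzzp$-occurrences (keeping track of the $p$'s shared with the $pp$-blocks and with the two outer $p$'s), that they account for all the $z$'s, and that the prefix excess before each of them is exactly $h$. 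The passage from ``all leaves at the same depth'' to ``perfect tree with $2^{h}$ leaves'' and the final squeeze are routine.
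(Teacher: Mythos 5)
Your proof is correct, but it takes a genuinely different route from the paper. The paper argues by induction on $h$: it decomposes $w=xw_1w_2y$ via the recursive definition of $\ENW$, uses Lemma~\ref{oiu33888weu3} and Lemma~\ref{fyth7xbsj} to show that both halves are themselves balanced of height $h-1$, and then adds the two inductive bounds $2^{h-1}\leq\occur(w_i,z)\leq 2^{h}$. You instead work globally: you extract the normal form of a word in $\Balanced$, identify the leaves of the $\ENW$-parse tree with the $m+1$ maximal $z$-runs, show by the prefix-excess argument that every leaf sits at depth exactly $h$, and conclude the tree is perfect. This buys you strictly more than the proposition asks for, namely the exact identities $\occur(w,pzp)+\occur(w,pzzp)=2^{h}$ and $\occur(w,z)=2^{h}+\occur(w,pzzp)$, from which the two-sided bound falls out immediately; the price is the normal-form bookkeeping you flag yourself (which does go through: every $Z_k$ is flanked by $p$'s, the blocks $py^{a_r}x^{a_r}p$ contribute zero excess, and Lemma~\ref{oiu33888weu3} pins the outer exponent $i$ to $h$). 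The paper's induction is shorter on the page, though it has its own delicate point --- one must justify that $w_1$ and $w_2$ lie in $\BalNaw$ (not merely $\ENW$) before invoking the induction hypothesis --- a point your global argument sidesteps entirely. One small caveat: your normal form with $a_r\geq 0$ follows the grammar of Lemma~\ref{dye7rur09} rather than the displayed definition of $\widetilde\Balanced$ (which literally requires $i\in\mathbb{N}_1$ in $py^ix^ip$); the paper itself uses the permissive version (its base case lists $xpzppzpy$), so you are consistent with the intended reading, but it is worth stating which definition of $\Balanced$ you are using.
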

\begin{proof}
From the definition of $\ENW$  it follows that $h\geq 1$.
We prove the proposition by induction. Obviously if $h=1$ then \[w\in\{xpzppzpy, xpzzppzpy, xpzppzzpy, xpzzppzzpy\}\mbox{.}\]
Thus the proposition holds for $h=1$. Suppose that the proposition holds for all $\widehat h<h$ and let $h\geq 2$. 
Since $\BalNaw\subseteq \ENW$,
 it follows that  
$h\geq 2$ implies that $w=xw_1w_2y$ for some $w_1, w_2\in \ENW\cup\{pzp,pzzp\}$ with $\{w_1,w_2\}\cap\ENW\not=\emptyset$. Without loss of generality suppose  that  $w_1\in\ENW$.

Let $h_1=\height(w_1)$.
Lemma \ref{oiu33888weu3} implies that $x^{h_1}\in \Prefix(w_1)$ and $y^{h_1}\in\Suffix(w_1)$. Lemma \ref{fyth7xbsj} implies that  $x^{h_1}\in\Prefix(w_2)$ and in consequence $w_2\in\BalNaw$, $y^{h_1}\in\Suffix(w_2)$, and $\height(w_2)=h_1$.

Because $x^{h_1}\in \Prefix(w_1)$ it follows that $x^{h_1+1}\in \Prefix(w)$. Thus  $h_1+1=h$. As we assumed that the proposition holds for all $\widehat h<h$, we can derive that
\[\occur(w,z)=\occur(w_1,z)+\occur(w_2,z)\leq 2^{h_1+1}+2^{h_1+1}=2^{h_1+2}=2^{h+1}\]
and
\[\occur(w,z)=\occur(w_1,z)+\occur(w_2,z)\geq 2^{h_1}+2^{h_1}=2^{h_1+1}=2^{h}\mbox{.}\]
This completes the proof.
\end{proof}
\begin{remark}
    Proposition \ref{ryybvnmd73yrh} could be also proven using tree graphs as follows: There are exactly $2^h$ leaves in a complete tree of height $h$, since there is a bijection (as
mentioned in Remark \ref{dyeu788hf}), there are $2^h$ occurrences of $pzp$ and $pzzp$, and hence the number of occurrences of $z$ in $w$ is between $2^h$ and $2^{h+1}$.
\end{remark}
Proposition \ref{ryybvnmd73yrh} has the following obvious corollary. 
\begin{corollary}
\label{rryt7fufyu3y}
If $n\in\mathbb{N}^+$, $w\in \BalNaw(n)$, and $h=\height(w)$ then
\[\log_2{n}-1\leq h\leq \log_2{n}\mbox{.}\]
\end{corollary}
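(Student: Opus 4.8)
The plan is to read the corollary off directly from Proposition \ref{ryybvnmd73yrh}. First I would unwind the definition of $\BalNaw(n)$: by construction this set collects exactly those balanced extended non-associative words $w\in\BalNaw$ with $\occur(w,z)=n$. So if we fix such a $w$ and set $h=\height(w,x)$, we are precisely in the situation covered by Proposition \ref{ryybvnmd73yrh}, with the quantity $\occur(w,z)$ appearing there equal to $n$.

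Second, I would invoke Proposition \ref{ryybvnmd73yrh} to obtain the two-sided estimate $2^{h}\leq n\leq 2^{h+1}$. Since $w\in\ENW$ forces $h\geq 1$ (as already observed in the proof of that proposition), and since $n\geq 1$, all three quantities in this chain are positive, so I may apply the strictly increasing function $\log_2$ to the whole chain without reversing any inequality. The left inequality then gives $h\leq\log_2 n$ and the right inequality gives $\log_2 n\leq h+1$.

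Finally, rewriting $\log_2 n\leq h+1$ as $\log_2 n-1\leq h$ and combining it with $h\leq\log_2 n$ yields the claimed sandwich $\log_2 n-1\leq h\leq\log_2 n$, which completes the argument.

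I do not expect any genuine obstacle in this step: every ingredient is either the already-established Proposition \ref{ryybvnmd73yrh} or the elementary monotonicity of the base-$2$ logarithm. The only point requiring a moment of care is the direction of the inequalities when passing to logarithms, and since $\log_2$ is increasing on the positive reals this causes no difficulty; hence the corollary is indeed ``obvious'' as stated.
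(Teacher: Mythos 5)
Your argument is correct and is exactly the derivation the paper has in mind when it calls the corollary ``obvious'': apply the increasing function $\log_2$ to the bounds $2^{h}\leq n\leq 2^{h+1}$ from Proposition \ref{ryybvnmd73yrh} and rearrange. Nothing further is needed.
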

\begin{remark}
Note that the number of occurrence of $z$ does not uniquely determine the height. For example if $w_1=xxpzppzpyxpzppzpyy$ and $w_2=xpzzppzzpy$ , then $w_1,w_2\in\BalNaw(4)$ and $1=\height(w_2)<\height(w_1)=2$. 
\end{remark}

Given $w,u,v\in \Theta^+$, let $\replace(w,v,u)$ denote the word built from $w$ by replacing the first occurrence of $v$ in $w$ by $u$. Formally, if $\occur(w,v)=0$ then $\replace(w,v,u)=w$. If $\occur(w,v)=j>0$ and $w=w_1vw_2$, where $\occur(vw_2,v)=j$ then $\replace(w,v,u)=w_1uw_2$.

We prove that the set of balanced extended non-associative words $\BalNaw(n)$ having $n$ occurrences of $z$ is nonempty for each  $n\geq 2$.
\begin{proposition}
\label{un7bbv3b}
If $n\in \mathbb{N}^+$ and $n\geq 2$ then $\BalNaw(n)\not=\emptyset$.
\end{proposition}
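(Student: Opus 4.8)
The plan is to exhibit, for each $n\ge 2$, an explicit member of $\BalNaw(n)$; it will be a ``perfect'' extended non-associative word, one whose associated full binary tree (as in Remark~\ref{dyeu788hf}) has all its leaves at the same depth. First I would choose the height: put $h=\lfloor\log_2 n\rfloor$, so $h\ge 1$ and $n=2^h+k$ with $0\le k\le 2^h-1$. A perfect binary tree of height $h$ has $2^h$ leaves, each of which is a factor $pzp$ or $pzzp$, contributing one or two occurrences of $z$; so labelling exactly $k$ of the $2^h$ leaves by $pzzp$ and the remaining $2^h-k$ by $pzp$ yields a word with $(2^h-k)+2k=2^h+k=n$ occurrences of $z$. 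Concretely, set $T_1(\ell_1,\ell_2)=x\ell_1\ell_2y$ and, for $h\ge 2$,
\[
T_h(\ell_1,\dots,\ell_{2^h})=x\,T_{h-1}(\ell_1,\dots,\ell_{2^{h-1}})\,T_{h-1}(\ell_{2^{h-1}+1},\dots,\ell_{2^h})\,y ,
\]
with each $\ell_r\in\{pzp,pzzp\}$ (equivalently, start from the all-$pzp$ perfect tree of height $h$ and apply $\replace(\cdot,pzp,pzzp)$ exactly $k$ times), and let $w$ be such a $T_h$ with precisely $k$ leaves equal to $pzzp$. That $w\in\ENW$ and $\occur(w,z)=n$ is immediate: the recursion for $T_h$ is literally a derivation in the grammar of Lemma~\ref{dye7rur06}, each $\ell_r$ being produced by $P\to pzp$ or $P\to pzzp$, and the count of $z$'s was just computed.

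The real content is to show $w\in\Balanced$. For this I would first prove, by induction on $h\ge 1$, the normal form
\[
T_h(\ell_1,\dots,\ell_{2^h})=x^h\,pz^{d_1}p\,(y^{j_1}x^{j_1})\,pz^{d_2}p\,(y^{j_2}x^{j_2})\cdots(y^{j_{2^h-1}}x^{j_{2^h-1}})\,pz^{d_{2^h}}p\,y^h ,
\]
where $\ell_r=pz^{d_r}p$ with $d_r\in\{1,2\}$ and $j_1,\dots,j_{2^h-1}\ge 0$. The base case $h=1$ is $x\,pz^{d_1}p\,pz^{d_2}p\,y$, i.e.\ $j_1=0$. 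For the inductive step, the recursion $T_h=x\,T_{h-1}(\textrm{left})\,T_{h-1}(\textrm{right})\,y$ merges the outer $x$ with the left half's prefix $x^{h-1}$ into $x^h$, merges the right half's suffix $y^{h-1}$ with the outer $y$ into $y^h$, and between the two halves produces a connector $y^{h-1}x^{h-1}$ between leaf $2^{h-1}$ and leaf $2^{h-1}+1$; the remaining leaves and connectors are inherited, so the new exponent sequence is the left one, then $h-1$, then the right one, of total length $2^h-1$ and all entries $\ge 0$.

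Regrouping the normal form around its middle gives
\[
T_h=x^h\,p\,z^{d_1}\,v\,z^{d_{2^h}}\,p\,y^h ,\qquad v=\big(py^{j_1}x^{j_1}p\big)\,z^{d_2}\,\big(py^{j_2}x^{j_2}p\big)\,z^{d_3}\cdots z^{d_{2^h-1}}\,\big(py^{j_{2^h-1}}x^{j_{2^h-1}}p\big) ,
\]
where the $r$-th factor $py^{j_r}x^{j_r}p$ is assembled from the closing $p$ of $\ell_r$, the connector $y^{j_r}x^{j_r}$, and the opening $p$ of $\ell_{r+1}$. Each such factor lies in $\widetilde\Balanced$ — it is generated by $V\to pTp$ with $T\to yTx\mid\epsilon$ in the grammar of Lemma~\ref{dye7rur09}, the case $j_r=0$ giving $pp$ — and consecutive factors are joined by $z^{d_r}$ with $d_r\in\{1,2\}$, so $v\in\widetilde\Balanced$ by the closure rules $v_1zv_2,v_1zzv_2\in\widetilde\Balanced$. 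Since $d_1,d_{2^h}\in\{1,2\}$ and $h\ge 1$, the displayed expression for $T_h$ is exactly of the shape $x^h\,pz^{d_1}\,v\,z^{d_{2^h}}p\,y^h\in\Balanced$. Hence $w\in\ENW\cap\Balanced=\BalNaw$ with $\occur(w,z)=n$, so $\BalNaw(n)\ne\emptyset$.

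I expect the main obstacle to be the normal-form claim and its regrouping: keeping straight which occurrence of $p$ belongs to which leaf once the word is cut at its middle, and being comfortable with the empty connectors $y^0x^0$, i.e.\ the $pp$ blocks, which are admissible in $\widetilde\Balanced$ precisely because of the rule $T\to\epsilon$. The remaining ingredients — the choice $h=\lfloor\log_2 n\rfloor$, the computation of $\occur(w,z)$, and $\ENW$-membership — are routine. A more ``recursive'' alternative (prove $x\,w'\,w''\,y\in\Balanced$ whenever $w',w''$ are height-equal perfect-tree words in $\Balanced$, and that $\replace(w,pzp,pzzp)\in\BalNaw$ when $w\in\BalNaw$ and $pzp\in\Factor(w)$) would also work but forces the same bookkeeping, so I would present the normal-form argument.
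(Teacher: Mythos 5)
Your construction is essentially the paper's: both build a perfect binary tree of height about $\log_2 n$ and tune the leaf labels between $pzp$ and $pzzp$ to hit exactly $n$ occurrences of $z$ (you start from all-$pzp$ and upgrade $k$ leaves, the paper starts from all-$pzzp$ and downgrades), and your argument is correct. Your normal-form verification of membership in $\Balanced$ fills in the step the paper dismisses as ``one can easily verify,'' and you correctly flag the one subtlety, namely that the $pp$ blocks require the grammar's rule $T\to\epsilon$ (equivalently, allowing $i=0$ in the definition of $\widetilde\Balanced$, which the paper itself implicitly assumes, e.g.\ in the base case of Proposition \ref{ryybvnmd73yrh}).
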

\begin{proof}
Let $j\in\mathbb{N}^+$ be such that $2^{j-1}<n\leq 2^j$. Obviously such $j$ exists and is uniquely determined. 
Let $w_1=xpzzppzzpy$. Let $w_{i+1}=xw_iw_iy$ for every $i\in\mathbb{N}^+$. Clearly $\occur(w_j,z)=2^{j+1}$ and $w_j\in\BalNaw(2^j)$.
Note that $\occur(w_j,pzzp)=2^{j}$. 

Let $w_{j,0}=w_j$ and $w_{j,i+1}=\replace(w_{j,i},pzzp,pzp)$, where $i\in \mathbb{N}^+\cup\{0\}$ and $i< 2^{j}$. Let $\alpha=2^j-n$. Then one can easily verify that $\occur(w_{j,\alpha},z)=n$ and $w_{j,\alpha}\in\BalNaw(n)$.

In principle, we construct a balanced extended non-associative word $w_j$ having $2^{j}$ occurrences of $pzzp$ and then we replace a certain number of occurrences of $pzzp$ with the factor $pzp$ to achieve the required number of occurrences of $z$. 
This completes the proof.
\end{proof}

\section{Dissection of infinite languages}

In \cite{YAMAKAMI2013116} it was shown that every constantly growing language can be dissected by some regular language. 
\begin{lemma}(see \cite[Lemma $3.3$]{YAMAKAMI2013116})
\label{dyf7u38udi}
Every infinite constantly growing language is $\REG$-dissectible.
\end{lemma}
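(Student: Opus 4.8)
The plan is to exploit the fact that the constant-growth hypothesis forces the set of word-lengths occurring in $L$ to have \emph{uniformly bounded gaps} above the threshold $c_0$, and then to dissect $L$ by a regular language whose membership depends only on the length of a word taken modulo a suitably large period.

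First I would set $M=\max K$ and $\Lambda=\{\vert w\vert\mid w\in L\}\subseteq\mathbb{N}_1$. Since $L$ is infinite over a finite alphabet, $\Lambda$ is infinite. Fix any $n_0\in\Lambda$ with $n_0\geq c_0$. Applying the constant-growth property repeatedly, I would build an increasing sequence $n_0<n_1<n_2<\cdots$ of elements of $\Lambda$ with $n_{i+1}-n_i\in K$, hence $n_{i+1}-n_i\leq M$ for every $i$; being strictly increasing, this sequence is unbounded. Consequently every integer interval $[a,a+M]$ with $a\geq n_0$ contains some $n_i$, and therefore meets $\Lambda$: taking the largest index $i$ with $n_i\leq a$ yields $a<n_{i+1}\leq n_i+M\leq a+M$, so $n_{i+1}\in[a,a+M]\cap\Lambda$. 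This bounded-gap statement is the heart of the argument.

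Next I would turn this into a regular dissector. Set $B=M+1$ and $p=2B$, and define the length set $S=\{n\in\mathbb{N}_1\mid (n\bmod p)<B\}$ together with the language $R=\{w\in\Alpha^*\mid\vert w\vert\in S\}$, where $\Alpha$ is the alphabet of $L$. Because $S$ is periodic with period $p$, the language $R$ is regular: it is accepted by the $p$-state automaton that tracks $\vert w\vert\bmod p$ and accepts exactly the residues less than $B$. The set $S$ is a union of the integer intervals $[2kB,\,2kB+M]$ for $k\geq 0$, each of width $M$; its complement $\mathbb{N}_1\setminus S$ is a union of the intervals $[(2k+1)B,\,(2k+1)B+M]$ for $k\geq 0$, again each of width $M$.

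Finally I would combine the two ingredients. By the bounded-gap statement every interval $[a,a+M]$ with $a\geq n_0$ meets $\Lambda$; since both $S$ and its complement contain infinitely many such intervals (for arbitrarily large $k$), both $S\cap\Lambda$ and $\Lambda\setminus S$ are infinite. As each length in $\Lambda$ is realised by at least one word of $L$, this gives $\vert R\cap L\vert=\vert\{w\in L\mid\vert w\vert\in S\}\vert=\infty$ and $\vert L\setminus R\vert=\vert\{w\in L\mid\vert w\vert\notin S\}\vert=\infty$, so $R$ dissects $L$ and $L$ is $\REG$-dissectible. The only genuinely delicate point is the first step: one must verify that the word-by-word phrasing of constant growth really does yield uniformly bounded gaps in the length set, with the chain $n_0<n_1<\cdots$ escaping to infinity so that the interval-covering argument applies for all large $a$. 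Everything afterward is bookkeeping, the key choice being to take the period $p$ strictly larger than twice the maximal gap so that both the accepting and the rejecting residue blocks are wide enough to be guaranteed to capture a length realised in $L$.
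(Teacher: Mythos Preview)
The paper does not give its own proof of this lemma; it simply quotes it as Lemma~3.3 of \cite{YAMAKAMI2013116}. Your argument is correct and is essentially the standard one: constant growth forces uniformly bounded gaps (at most $M=\max K$) in the length set above the threshold $c_0$, and a regular language defined by a residue class modulo a period exceeding $2M$ then catches infinitely many lengths in $L$ on each side. The only cosmetic point is that you might state explicitly why $\Lambda$ contains some $n_0\geq c_0$ (it does, since $\Lambda$ is infinite), but this is already implicit in your remark that $\Lambda$ is infinite.
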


In the next proposition we show under which condition we can dissect a language $L\subseteq\BalNaw$ by a regular language. Informally,  the proposition says that a geometrically growing subset of balanced extended non-associative words is $REG$-dissectible.
\begin{proposition}
\label{rujfkie778d}
If $\beta\in \mathbb{N}^+$, $\beta\geq \gmTwo$, and $L\subseteq \BalNaw$ is an infinite language such that for each $w_1\in L$ there is $w_2\in L$ with \[\occur(w_1,z)<\occur(w_2,z)\leq \beta\occur(w_1,z)\] then there is a regular language $R\subseteq \Theta^*$ such that $R$ dissects $L$.
\end{proposition}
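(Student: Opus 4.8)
The plan is to reduce the statement to the dissection of a constantly growing \emph{unary} language, which is settled by Lemma~\ref{dyf7u38udi}. For $w\in\BalNaw$ abbreviate $h(w)=\height(w,x)$, and put $H=\{h(w)\mid w\in L\}$ and $\overline L=\{x^h\mid h\in H\}\subseteq\{x\}^*$. First I would check that $H$ (hence $\overline L$) is infinite. Any $w\in\BalNaw\subseteq\ENW$ satisfies $\occur(w,x)+1=\occur(w,pzp)+\occur(w,pzzp)$ (Remark~\ref{dyeu788hf}) and $\occur(w,z)=\occur(w,pzp)+2\occur(w,pzzp)$, so if $\occur(w,z)=n$ then $\occur(w,x)=\occur(w,y)\le n-1$ and $\occur(w,p)=2(\occur(w,x)+1)\le 2n$, whence $|w|\le 5n$; thus $\BalNaw(n)$ is finite for every $n$, and by Proposition~\ref{ryybvnmd73yrh} the set $\{w\in L\mid h(w)=h\}$ is finite for every $h$, so the infinitude of $L$ forces $H$ to be infinite. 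The bridge to be used throughout is Corollary~\ref{rryt7fufyu3y}: for $w\in\BalNaw$ with $n=\occur(w,z)$ one has $\log_2 n-1\le h(w)\le \log_2 n$.

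Next I would show $\overline L$ is constantly growing with $c_0=1$ and $K=\{1,2,\dots,\alpha+1\}$. Fix $x^{h}\in\overline L$ and pick $w_1\in L$ with $h(w_1)=h$. Applying the hypothesis repeatedly produces $w_1,w_2,w_3,\dots\in L$ with $\occur(w_i,z)<\occur(w_{i+1},z)\le 2^{\alpha}\occur(w_i,z)$. Taking logarithms and invoking Corollary~\ref{rryt7fufyu3y} for each $w_i$ gives, on one hand, $h(w_{i+1})\ge\log_2\occur(w_{i+1},z)-1>\log_2\occur(w_i,z)-1\ge h(w_i)-1$, hence $h(w_{i+1})\ge h(w_i)$; and on the other, $h(w_{i+1})\le\log_2\occur(w_{i+1},z)\le\alpha+\log_2\occur(w_i,z)\le h(w_i)+\alpha+1$. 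So $\bigl(h(w_i)\bigr)_i$ is non-decreasing, unbounded (the $z$-counts strictly increase), and has jumps of size at most $\alpha+1$. Let $j$ be least with $h(w_j)>h$; minimality gives $h(w_{j-1})=h$, so $h<h(w_j)\le h+\alpha+1$, and $x^{h(w_j)}\in\overline L$ witnesses the constantly growing condition. By Lemma~\ref{dyf7u38udi} there is a regular language $R'$ dissecting $\overline L$; replacing $R'$ by $R'\cap\{x\}^*$ (still regular, and harmless since $\overline L\subseteq\{x\}^*$) we may assume $R'\subseteq\{x\}^*$. Then $S:=\{h\in\mathbb N_1\mid x^h\in R'\}$ is eventually periodic, and both $H\cap S$ and $H\setminus S$ are infinite.

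Finally I would lift $R'$ to a regular language $R\subseteq\Theta^*$. For $w\in\BalNaw$ with $h=h(w)$, Lemma~\ref{oiu33888weu3} together with its proof (which locates the leading block $x^hp$) shows that the maximal $x$-prefix of $w$ has length exactly $h$ and is immediately followed by $p$, so $w\in x^hp\,\Theta^*$ and $w\notin x^{h'}p\,\Theta^*$ for $h'\ne h$. Put $R=\bigcup_{h\in S}x^hp\,\Theta^*$; since $S$ is eventually periodic, $R$ is regular (an automaton counts the leading $x$'s modulo the period of $S$, tests membership in $S$ upon reading the following $p$, and then accepts the remaining suffix). By the preceding observation, for every $w\in L\subseteq\BalNaw$ we have $w\in R$ iff $h(w)\in S$. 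Choosing for each $h\in H\cap S$ a word $w\in L$ with $h(w)=h$ yields infinitely many pairwise distinct (their heights differ) elements of $L\cap R$, and likewise $H\setminus S$ yields infinitely many elements of $L\setminus R$; hence $R$ dissects $L$. The main obstacle is the middle step: a single application of the hypothesis need not raise the height, so the iterated chain and the \emph{two-sided} logarithmic estimate of Corollary~\ref{rryt7fufyu3y} are both needed; by contrast the finiteness of $\BalNaw(n)$ and the regularity of the lift $R$ are routine.
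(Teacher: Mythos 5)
Your proposal is correct and follows essentially the same route as the paper: map $L$ to the unary language of heights, use Corollary~\ref{rryt7fufyu3y} to turn the geometric growth of the $z$-counts into constant growth of the heights, dissect that with Lemma~\ref{dyf7u38udi}, and lift the result back through the prefix $x^hp$ guaranteed by Lemma~\ref{oiu33888weu3}. Your treatment is in fact tighter in two spots the paper passes over quickly: you verify that the set of heights is infinite (via finiteness of each $\BalNaw(n)$), and you handle the fact that a single application of the hypothesis need not strictly increase the height by iterating to a chain with non-decreasing, unbounded heights and bounded jumps.
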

\begin{proof}
Let $\alpha\in\mathbb{N}^+$ be such that $\gmTwo^{\alpha-1}< \beta\leq \gmTwo^{\alpha}$. Obviously such $\alpha$ exists and is uniquely determined.
Given $w_1\in L$, let $w_2\in L$ be such that \begin{equation}\label{ryd7wuy37y}n_1<n_2\leq \beta n_1\leq\gmTwo^{\alpha}n_1\mbox{,}\end{equation}where $n_1=\occur(w_1,z)$ and $n_2=\occur(w_2,z)$. From the conditions of the proposition such $w_2$ exists.

Without loss of generality suppose that $\log_2{n_1}\geq 2$. Note that there are only finitely many words $v\in\BalNaw$ with $\log_2{(\occur(v,z))}<2\mbox{.}$

Let $h_1=\height(w_1)$ and $h_2=\height(w_2)$. Corollary \ref{rryt7fufyu3y} implies that  \begin{equation}\label{ty7uyr7uq33}\log_2{n_1}-1\leq h_1 \mbox{ and }h_2\leq \log_2{n_2}\end{equation}

From (\ref{ryd7wuy37y}) and (\ref{ty7uyr7uq33}) it follows that
\begin{equation}
\label{rr6fhbsve}
h_2\leq \log_2{n_2}\leq  \log_2{(\gmTwo^{\alpha}n_1)}=\alpha+\log_2{n_1}\leq \alpha+1+h_1\mbox{.}
\end{equation}
Since we selected $w_1$ arbitrarily, it  follows from (\ref{rr6fhbsve}) that  \[H=\{x^h\mid h=\height(v)\mbox{ and } v\in L\}\] is a constantly growing language. 

Lemma \ref{dyf7u38udi} implies that $H\subseteq\{x\}^*$ is $REG$-dissectible. Let $\widehat R\subseteq \{x\}^*$ be a regular language that dissects $H$. Let $R=\{rpv\mid r\in\widehat R\mbox{ and }v\in\Theta^*\}\subseteq\Theta^{*}$. Obviously  $R$  is a regular language that dissects $L$; to see this, recall that if $w\in\BalNaw$, $i\in\mathbb{N}^+$, $a\in\Theta\setminus\{x\}$, and  $x^ia\in\Prefix(w)$ then $a=p$.

This completes the proof.
\end{proof}

We step to the proof of the main theorem of the current article.

\begin{proof}[of Theorem \ref{dufife554d845}]
Without loss of generality let $\Sigma=\{z\}$ be the alphabet with the letter $z\in\Theta$. Let $\pi: \Theta^*\rightarrow \Sigma^*$ be an erasing alphabetical homomorphism defined as follows:
\[
  \pi(a) =
  \begin{cases}
    z & \mbox{ If }a=z\mbox{.} \\
    \epsilon & \mbox{ If }a\in\{x,y,p\}\mbox{.}
  \end{cases}
\]

Thus $\pi$ erases all letters except for the letter $z$.
From the definition of $\ENW$, $\Balanced$, and $\BalNaw$, it follows that the language $\BalNaw$ is an intersection of two context-free languages $\ENW$ and $\Balanced$. Let $\MM_1=\ENW$ and let $\MM_2=\Balanced$; recall that $\MM_1$ and $\MM_2$ are used in the statement of Theorem \ref{dufife554d845}.

Let $\widehat L=\{w\in \BalNaw\mid \pi(w)\in L)\}\mbox{.}$ Note that $\widehat L$ contains $w\in \BalNaw$ if and only if there is a word $v\in L$ such that the number of occurrences of $z$ in $w$ is equal to the length of $v$; formally $\occur(w,z)=\vert v\vert$. Proposition \ref{un7bbv3b} implies that $\widehat L$ is an infinite language.  

Let $c\in\mathbb{R}^+$ be such that for every $u\in L$ there exists $v\in L$ with $\vert u\vert<\vert v\vert\leq c\vert u\vert$. Since $L$ is a geometrically growing language, we know that such $c$ exists. Let $\beta\in\mathbb{N}^+$ be such that $\beta\geq 2$ and $\beta\geq c$.
Hence $L$ is $\beta$-geometrically growing language. It follows  that if $w_1\in \widehat L$ then there is a word $w_2\in \widehat L$ with \[\occur(w_1,z)<\occur(w_2,z)\leq \beta\occur(w_1,z)\mbox{.}\] Then Proposition \ref{rujfkie778d} implies that there is a regular language $R\subseteq \Theta^*$ that dissects $\widehat L$.
This implies that the homomorphic image $\pi\left(R\cap\ENW\cap\Balanced\right)$ dissects the language $L$. 
This completes the proof.
\end{proof}

\section{Dissection and i-separation}
\label{skdjiek99f8f}
As mentioned in the introduction, for convenience we present a generalization of Lemma \ref{djueir8887d} (\cite[Lemma $5.1$]{YAMAKAMI2013116}), which demonstrates the connection between dissectibility and i-separation. The presented proof is just a copy of the proof in \cite{YAMAKAMI2013116} by changing $\REG$ to $\mathcal{C}$.
\begin{lemma}
\label{ujdkeijfi4rekd}
Let $\mathcal{A}$, $\mathcal{B}$, and $\mathcal{C}$ be any three language families and assume that $\mathcal{A}-\mathcal{B}$ is $\mathcal{C}$-dissectible. It then holds that, for any $A\in\mathcal{A}$ and any $B\in\mathcal{B}$, if $A$ i-covers $B$, then there exists a language in $\mathcal{E}$ that i-separates $i(B,A)$, where $\mathcal{E}$ expresses the set $\{B\cup(C\cap A)\mid A\in\mathcal{A}, B\in\mathcal{B},C\in\mathcal{C}\}$. In other words, $\mathcal{E}$ i-separates $i(\mathcal{B}, \mathcal{A})$.
\end{lemma}
\begin{proof}
Let $A\in\mathcal{A}$ and $B\in\mathcal{B}$ be two infinite languages. Let $D=A-B$ and assume that $D$ is infinite. Our assumption guarantees the existence of a language $C\in\mathcal{C}$ for which $C$ dissects $D$. We set $E=B\cup(A\cap C)$. Since $C$ dissects $D$, it follows that $\vert(A\cap C)-B\vert=\infty$ and $\vert(A\cap \overline C)-B\vert=\infty$. It follows that $B\subseteq E\subseteq A$ and $\vert A-E\vert=\vert E-B\vert=\infty$. Thus, $E$ i-separates $i(B,A)$. Since $C\in\mathcal{C}$, $E$ belongs to the language family $\mathcal{E}$. This completes the proof.
\end{proof}

\section{Open questions}

In the current article we applied a new idea of dissecting a language $L$ by a homomorphic image of a language $\widehat L$ from the family of languages $\CFL_2$. The idea can be generalized for every $\CFL_k$, where $k\in\mathbb{N}^+$. 
Let us introduce a notation for this technique.
Given an alphabet $\Alpha$ and a positive integer $k$, let 
\[\begin{split}\hiCFL_{k,\Alpha}=\{L\subseteq \Alpha^*\mid \mbox{ there are an alphabet }\widehat\Alpha\mbox{ and }\\ \mbox{ context-free languages }L_i\subseteq\widehat\Alpha^* \mbox{ with }i\in\{1,2,\dots,k\}\\ \mbox{ and a homomorphism }\phi:\widehat\Alpha^*\rightarrow\Alpha^* \\  \mbox{ such that }L=\phi(\bigcap_{i=1}^kL_i)\}\mbox{.}\end{split}\]

The prefix ``hi'' stands for ``homomorphic image''. Using the set $\hiCFL_{k,\Alpha}$ we can restate Theorem  \ref{dufife554d845} as follows:
\begin{corollary}
Every geometrically growing language over an alphabet $\Alpha$ with $\vert\Alpha\vert=1$ is $\hiCFL_{2,\Alpha}$-dissectible.
\end{corollary}

Moreover we introduced in the current article the notion of a geometrically growing language. We generalize this concept as follows.  
Let \[\Pi=\{\sigma:\mathbb{N}^+\rightarrow\mathbb{N}^+\mid \sigma(n)>n\mbox{ for all }n\in\mathbb{N}^+\}\mbox{.}\]
Given $\sigma\in\Pi$, we say that a language $L$ is \emph{$\sigma$-growing} if for every word $u\in L$ there is a word $v\in L$ such that $\vert u\vert<\vert v\vert\leq \sigma(\vert u\vert)$.
\begin{remark}
Let $\widetilde\sigma(n)=cn$ for some $c\in\mathbb{N}^+$ with $c>1$. Obviously $\widetilde\sigma\in\Pi$. A language $L$ is $\widetilde\sigma$-growing language if and only if $L$ is $c$-geometrically growing.
\end{remark}

Let $\mathcal{C}$ be a family of languages. Using the notion of $\sigma$-growing languages, we present the following open questions and problems:
\begin{itemize}
\item Find $\sigma\in\Pi$ such that there exists a $\sigma$-growing language $L$ that is not $\mathcal{C}$-dissectible or show that such $\sigma$ does not exist.
\item Find $\sigma\in\Pi$ such that: 
\begin{itemize}\item Every $\sigma$-growing language $L$ is $\mathcal{C}$-dissectible.
\item If $\widehat \sigma\in\Pi$ and $\widehat\sigma(n)>\sigma(n)$ for all $n\in\mathbb{N}^+$ then there is a $\widehat\sigma$-growing language $L$ that is not $\mathcal{C}$-dissectible.
\end{itemize}
\end{itemize}

Concerning the family of languages $\mathcal{C}$, we are particularly interested in $\REG$, $\CFL_k$, and $\hiCFL_{k,\Alpha}$ for all $k\in\mathbb{N}^+$.  However the questions may be of interest also for other families. 

We list some more open questions and problems in spite of the fact that some of them are already mentioned (directly or indirectly) above.
\begin{itemize}
\item Is the family of geometrically growing languages $\REG$-dissectible?
\item Does $\CFL$ i-separate $i(\CFL,\CFL)$ (mentioned in \cite{bucher1980} and \cite{YAMAKAMI2013116})?
\item Describe the ``minimal'' families of languages $\mathcal{C}$ such that $\mathcal{C}$ i-separates $i(\CFL,\CFL)$.
\item Let $\CFLGL\subseteq \CFL$ be the family of geometrically growing context-free languages. Describe the ``minimal'' families of languages $\mathcal{C}$ such that $\mathcal{C}$ i-separates $i(\CFLGL,\CFLGL)$. In particular, are all geometrically growing context-free
languages REG-dissectible?
\item Is there $\sigma\in\Pi$ such that if $i(L_1,L_2)\in i(\CFL,\CFL)$ then the language $L_2-L_1$ is $\sigma$-growing?
\item Describe languages that are $\CFL$-dissectible.
\item Find an example of a language $L$ such that $L$ is $\CFL$-dissectible and not $\REG$-dissectible.
\end{itemize}

For more open questions about dissectibility, we recommend the readers to review the section ``6. Future challenges'' in \cite{YAMAKAMI2013116}.

\acknowledgements
\label{sec:ack}
This work was supported by the Grant Agency of the Czech Technical University in Prague, grant No. SGS20/183/OHK4/3T/14.

\nocite{*}
\bibliographystyle{abbrvnat}
\bibliography{biblio}

\begin{thebibliography}{10}

\bibitem{bucher1980}
{\sc W.~Bucher}, {\em A density problem for context-free languages}, Bull. Eur.
  Assoc. Theor. Comput. Sci. EATCS 10,  (1980).

\bibitem{Domaratzki2001MinimalCO}
{\sc M.~Domaratzki, J.~Shallit, and S.~Yu}, {\em Minimal covers of formal
  languages}, in Developments in Language Theory, 2001.

\bibitem{10.1007/978-3-662-21545-6_34}
{\sc P.~Flajolet and J.~M. Steyaert}, {\em On sets having only hard subsets},
  in Automata, Languages and Programming, J.~Loeckx, ed., Berlin, Heidelberg,
  1974, Springer Berlin Heidelberg, pp.~446--457.

\bibitem{GINSBURG1966620}
{\sc S.~Ginsburg and S.~Greibach}, {\em Deterministic context free languages},
  Information and Control, 9 (1966), pp.~620 -- 648.

\bibitem{10.1007/978-3-319-64419-6_13}
{\sc J.~Julie, J.~Baskar~Babujee, and V.~Masilamani}, {\em Dissecting power of
  certain matrix languages}, in Theoretical Computer Science and Discrete
  Mathematics, S.~Arumugam, J.~Bagga, L.~W. Beineke, and B.~Panda, eds., Cham,
  2017, Springer International Publishing, pp.~98--105.

\bibitem{weiner1973}
{\sc L.~Liu and P.~Weiner}, {\em An infinite hierarchy of intersections of
  context-free languages}, Math. Systems Theory 7, 185–192.,  (1973).

\bibitem{post1944}
{\sc E.~L. Post}, {\em Recursively enumerable sets of positive integers and
  their decision problems}, Bull. Amer. Math. Soc., 50 (1944), pp.~284--316.

\bibitem{stanley_fomin_1999}
{\sc R.~P. Stanley and S.~Fomin}, {\em Enumerative Combinatorics}, vol.~2 of
  Cambridge Studies in Advanced Mathematics, Cambridge University Press, 1999.

\bibitem{10.1007/978-3-030-40608-0_24}
{\sc T.~Yamakami}, {\em Intersection and union hierarchies of deterministic
  context-free languages and pumping lemmas}, in Language and Automata Theory
  and Applications, A.~Leporati, C.~Mart{\'i}n-Vide, D.~Shapira, and
  C.~Zandron, eds., Cham, 2020, Springer International Publishing,
  pp.~341--353.

\bibitem{YAMAKAMI2013116}
{\sc T.~Yamakami and Y.~Kato}, {\em The dissecting power of regular languages},
  Information Processing Letters, 113 (2013), pp.~116 -- 122.

\end{thebibliography}
\label{sec:biblio}

\end{document}